\documentclass[twocolumn]{autart}  
\usepackage{graphicx}    
\usepackage{cite}
\usepackage{comment}
\usepackage{xcolor}

\usepackage{amsmath, amsfonts}
\newcommand{\reals}{\mathbb{R}}
\newcommand{\E}{\mathbb{E}}
\newcommand{\xdim}{d_x}
\newcommand{\bdim}{d_b}

\newcommand{\N}{\mathbb{N}}
\newcommand{\Dt}{\tau}

\newcommand{\PR}{\mathbb{P}}
\newcommand{\tf}{\psi}
\newcommand{\given}{\,|\,}
\newcommand{\dx}{\mathrm{d}x}
\newcommand{\V}{}
\newcommand{\wct}{\tau} 

\newcommand{\scn}{N_s}
\newcommand{\wcn}{N_c}

\newtheorem{assumption}{Assumption}

\usepackage{booktabs}

\newcommand{\DW}[1]{{\color{blue}[DW: #1]}}

\usepackage{algorithm}
\usepackage{algpseudocode}

\newtheorem{definition}{Definition}

\makeatletter
\def\thm@space@setup{%
  \thm@preskip=8pt plus 2pt minus 4pt
  \thm@postskip=\the\thm@preskip
}
\makeatother

\makeatletter
\g@addto@macro\normalsize{%
  \setlength\abovedisplayskip{6pt plus 2pt minus 4pt}
  \setlength\belowdisplayskip{6pt plus 2pt minus 4pt}
  \setlength\abovedisplayshortskip{0pt plus 3pt}
  \setlength\belowdisplayshortskip{4pt plus 2pt minus 2pt}
}
\makeatother

\begin{document}

\begin{frontmatter}

\title{Wiener-Space Cubature for Nonlinear Gaussian Filtering}

\author[Basis,Cornell]{Luke Snow}\ead{las474@cornell.edu}, \author[Basis,MIT]{Daniel Waxman},
\author[Basis]{Matthew E. Levine}

\address[Basis]{Basis Research Institute, Cambridge, MA}
\address[Cornell]{Cornell University, Ithaca, NY}  
\address[MIT]{Massachusetts Institute of Technology, Cambridge, MA}

\begin{keyword} nonlinear filtering, cubature methods, Wiener space cubature, state space models                          
\end{keyword}                             


\begin{abstract} Continuous-discrete filtering is a classical problem in signal processing and control theory, whereby the state driving a stochastic differential equation (SDE) must be inferred from noisy, partial discrete-time observations. Corresponding filtering algorithms are often formed as discrete-time methods and applied to a discretization of the SDE --- an approach limited by the incurred discretization error --- or model a type of Gaussian moment ordinary differential equation (ODE) flow --- which is approximate for nonlinear dynamics. We introduce a new method, the Wiener-space cubature Kalman filter (WSC-KF), for Gaussian filtering in this continuous-discrete setting. The WSC-KF combines spatial sigma-point representations with Wiener space cubature integration, effectively approximating iterated integral terms of the driving Brownian motion to higher-order than standard discretization (e.g., It\^o-Taylor expansions), or approximate moment-ODE flows (e.g., as in the unscented Kalman filter). Theoretically, we characterize the path-law approximation degree of WSC-KF through a tunable model order parameter. We prove that, depending on the model order chosen, WSC-KF provides weak order bounds comparable to or dominating that of a cubature Kalman filter using a first-order It\^o predict step. Numerically, we confirm that rate on a globally Lipschitz L\'evy-area diffusion.  Degree-$5$ Wiener cubature matches mixed iterated integrals that existing continuous-discrete Kalman predictors omit, while remaining efficiently computable; this is the mechanism enabling higher-order weak path-law approximation.
\end{abstract}

\end{frontmatter}

\section{Introduction}
\label{sec:intro}
Nonlinear continuous-discrete filtering aims to estimate the underlying state of a nonlinear stochastic dynamical system, modeled as a stochastic differential equation (SDE) from a noisy, discrete sequence of (potentially partial) observations. Standard approaches exploit sequential Bayesian inference to compute posterior estimates of the state at each time, given the observations and assumed dynamics. In the special case where the SDE and observation map are linear with Gaussian noise, the Kalman filter \cite{kalman1960new} provides an exact, recursive formula for this Gaussian posterior. In the case when the model dynamics or observations are truly nonlinear, one must generally resort to approximate solutions. In this work we focus on the latter regime. 

Approximate filters may be loosely divided into two categories \cite{arasaratnam2010cubature}:
so-called ``type-I'' filters compute the state evolution by directly approximating the solution of the Fokker--Planck--Kolmogorov (FPK) equation, e.g., by numerical solvers \cite{beard1999nonlinear,daum2006meshfree} or Monte Carlo \cite{berzuini1997dynamic,raviart2006analysis}. This class includes sequential Monte Carlo (particle) filters \cite{doucet2001introduction} and ensemble Kalman filters (CD-EnKF) \cite{masaya2021ensemble}.
Type-II filters instead compute a finite number of summary statistics of the posterior, typically under a Gaussian closure. This class includes the continuous-discrete Extended Kalman Filter (CD-EKF) \cite{ribeiro2004kalman}, continuous-discrete Unscented Kalman Filter (CD-UKF) \cite{wan2000unscented},
and cubature Kalman filter (CD-CKF), the
series-expansion UKF (SE-UKF) \cite{lyons2014series}, and the Taylor moment expansion (TME) filter \cite{zhao2020taylor} . Type-I filters are more expressive, but inherit a more difficult approximation problem, whereas type-II filters trade off expressivity for more scalable computation. The main challenge in this regime is that nonlinearities result in non-Gaussian flows, which are difficult to approximate with conventional SDE discretizations or Gaussian moment-ODEs common to type-II filters. Furthermore, methods which can accurately approximate these non-Gaussian flows, such as sequential Monte Carlo methods, tend to be numerically brittle and also prohibitively expensive even in moderate dimensions. 

Previous work has acknowledged that the approximation of SDE transitions via discretization or Gaussian moment ODEs may be insufficient. The SE-UKF \cite{lyons2014series}, for example, instead replaces Brownian motion via a truncated Fourier series, and the TME \cite{zhao2020taylor} expands the first two moment maps in time through the infinitesimal generator. These approaches, however, do not match the mixed iterated integrals of the driving Brownian motion, which become manifest through L\'evy area terms arising from multiplicative noise. 



We instead draw on tools from rough path theory \cite{friz2014course}, specifically the theory of cubature formulae on Wiener space \cite{lyons2004cubature}, to provide a filter which \emph{efficiently} models non-Gaussian flow through \emph{deterministic} evaluations. Specifically, each filtering step solves a system of ODEs along \emph{Wiener space cubature} paths which can be shown to capture high-order statistics of the driving Brownian motion \cite{litterer2026high,lyons2004cubature}; this is juxtaposed to, for example, It\^o-Taylor expansions beyond weak order $1.0$ (Euler--Maruyama), which require solving (generally intractable) iterated stochastic integrals \cite{kloeden1992numerical}. Wiener space cubature paths are constructed to match these iterated integral terms, thus enabling higher-order accuracy whilst gaining the computational benefits of parallel ODE solvers. These constructions were originally designed for computing efficient and stable solutions of parabolic PDEs, but have recently seen broader application, such as neural SDE optimization \cite{snow2025efficient}. 



Algorithmically, the WSC-KF is obtained as a drop-in modification to any Gaussian filter based on a sigma-point approximation (e.g., the CD-UKF or CD-CKF): instead of developing a spatial quadrature through a first-order It\^o step of the dynamics, each cubature node is pushed forward by integrating the same controlled ODE along a finite set of deterministic Wiener-space cubature paths. The point cloud is then collapsed back to a Gaussian approximation at each observation iterate. Similar Wiener-cubature prediction mechanisms have been exploited in part for particle filter (type-I) architectures \cite{crisan2013kusuoka,lee2013adaptive}. The advantage of WSC-KF stems from designing these in a type-II architecture.

The main advantages of the WSC-KF are:
\begin{enumerate}
\item \textit{Signature-matching path law.} The predict step exploits a high-order Wiener cubature approximation of the SDE path law. Degree $m \geq 5$  paths match expected mixed iterated integrals which are not captured by moment-ODE Kalman predictors, the SE-UKF Fourier truncation, or the TME closure of the first two moments.
\item \textit{Tunable weak order relative to CD-CKF.} We characterize the weak error in the predict step of the WSC-KF in terms of the chosen integration ``order'' for the Wiener space cubature rule (Theorem~\ref{thm:WSC-KF-error}). We further compare to the CD-CKF (Corollary~\ref{cor:cdckf-comparison}), which is algorithmically similar to our method but uses an It\^o--Taylor expansion, showing that the rates coincide at order $m = 3$, and are better for WSC-KF at $m = 5$. 
\item \textit{Deterministic ODE forecast.} The Wiener-cubature paths are precomputed and deterministic. Thus, there is no Brownian sampling in the predict step, in contrast to the CD-EnKF and particle-based filters. This enables several computationally desirable properties, such as improved time- and memory-efficiency for gradient computation.
\end{enumerate}

Broadly, this paper motivates how Wiener-space cubature can be inserted into the prediction step of a continuous-discrete Gaussian filter, to yield a tunable high-order weak approximation of the transition law, while retaining Gaussian-update structure. This paper is organized as follows. Section~\ref{sec:cdfilt} introduces the continuous-discrete nonlinear filtering problem, Section~\ref{sec:wsc} introduces the mathematical preliminaries of Wiener space cubature formulae, Section~\ref{sec:wsc-kf} constructs and analyzes our WSC-KF filter, and Section~\ref{sec:num} illustrates the practical efficacy of the WSC-KF in a numerical implementation.\footnote{Full open-source software for this filter, as well as several thorough experimental demonstrations, are provided at \texttt{https://github.com/LukeSnow0/roughpath-filtering}.}

\section{Continuous-Discrete Nonlinear Filtering}
\label{sec:cdfilt}
Here we introduce the continuous-discrete nonlinear filtering framework, and motivate our contribution in relation to existing state-of-the-art filtering algorithms.

\subsection{System Dynamics Model}
Continuous-discrete filtering \cite{jazwinski2007stochastic} refers to the problem of state estimation in a system governed by a continuous stochastic dynamics model over the \emph{state} $x_t \in \reals^{d_x}$, with observations $y_{t_k} \in \reals^{d_y}$ at a finite set of times $t_{1}, \dots, t_{K}$:
\begin{align}
\label{eq:model}
\mathrm{d}x_t &= f(t, x_t) \, \mathrm{d}t + G(t,x_t)\, \mathrm{d} W_t, \,\,\, x_0 \sim \mathcal{N}(\bar{x}_0, P_0) \\
y_{k} &\sim p(y_{k} \given t_k, x_{t_k}).
\label{eq:obs}
\end{align}
Here \eqref{eq:model} is an It\^o stochastic differential equation with $\V{f}: [0,T] \times \reals^{\xdim} \to \reals^{\xdim}$, $\V{G}: [0,T] \times \reals^{\xdim} \to \reals^{\xdim} \times \reals^{d_b}$ as nonlinear drift and diffusion functions, respectively, and where $\V{W}_t$ is a $\bdim$-dimensional standard Brownian motion and $\mathcal{N}(\bar{x}_0, P_0)$ is the distribution of initial condition $x_0$. Notably, \eqref{eq:obs} posits a generic discrete observation model, with additive Gaussian noise as a special case.

\subsection{Filtering Equations}
The goal of nonlinear Bayesian filtering is to recursively form a posterior estimate of the state $\V{x}_{t_k}$ from the observations $\{\V{y}_{k}\}_{k=1}^K$ and the model \eqref{eq:model},\,\eqref{eq:obs} via sequential Bayesian inversion. Specifically, denote
\begin{equation}
\label{eq:post}
\pi_k(\dx) = \PR(\V{x}_{t_k} \in \dx \given \V{y}_{1:k})
\end{equation}
as the filtered Bayesian posterior distribution at time $t_k$. Updating $\pi_k$ to $\pi_{k+1}$ at the next iteration consists of a \emph{prediction} step and an \emph{update} step.

\textbf{Prediction Step.} In the prediction step at iteration $k$, the previous filtering distribution $\pi_{k-1}$ is propagated through the transition kernel induced by the model \eqref{eq:model} to form the predicted distribution $\pi_{k|k-1}$, given by
\begin{equation}
\label{eq:filt_predict}
\pi_{k|k-1}(\dx) = \int \PR(\V{x}_{t_k}\in \dx \given \V{x}_{t_{k-1}} = \V{x}')\, \pi_{k-1}(\dx') \tag{predict step} 
\end{equation}

Under standard assumptions (namely, let $\pi_{k \given k-1}$ admit a density w.r.t. Lebesgue, and standard regularity and nondegeneracy assumptions hold; cf. \cite{stroock2007multidimensional}), it is well known that \eqref{eq:filt_predict} corresponds at time $t_k$ to the density
$\pi(t,x) := \PR(\V{x}_t = x | \V{y}_{1:k})$ which evolves via the FPK equation, the exact solution of which is intractable for general SDEs \cite{fuller1969analysis,jazwinski2007stochastic}.

\textbf{Update Step.} In the update step, the new observation $\V{y}_k$ is incorporated through the Bayesian inversion
\begin{equation}
\label{eq:filt_update}
\pi_k(\dx_{t_k}) \propto \PR(\V{y}_k|\V{x}_{t_k})\pi_{k|k-1}(\dx_{t_{k}}) \tag{update step}
\end{equation}
which is scaled to a density by the marginal likelihood $\PR(\V{y}_k | \V{y}_{1:k-1})$. Here $\PR(\V{y}_k|\V{x}_{t_k})$ denotes the conditional probability measure induced by the observation model \eqref{eq:obs}. Observe that these recursive filtering equations operate in discrete-time in accordance with the observation rate, but utilize the continuous-time model \eqref{eq:model} in computing \eqref{eq:filt_predict}.\footnote{For continuous-time observation processes, filtered states are given by the Kushner-Stratonovich or Zakai equations; though some efficient algorithms are known \cite[Sec. 10.3--10.4]{sarkka2019applied}, we focus only on the CD-SSM problem in this work.}

\textbf{Approaches for Prediction.} In this work we innovate along the \emph{prediction} step, and we thus outline here the existing approaches. Existing approaches to \eqref{eq:filt_predict} can be delineated along the type-I/type-II line outlined above. Type-I methods approximate the FPK equation directly, via PDE solvers \cite{beard1999nonlinear} or particle ensembles \cite{doucet2001introduction, raviart2006analysis, masaya2021ensemble}, capturing arbitrary non-Gaussian structure at the cost of poor dimensionality scaling and particle degeneracy issues. Type-II methods instead approximate the SDE's one-step transition under a Gaussian closure, via discretization (e.g., It\^o-Taylor \cite{kloeden1992numerical}) or direct integration of Gaussian moment ODEs, yielding filters such as CD-EKF \cite{jazwinski2007stochastic}, CD-UKF/CD-CKF \cite{wan2000unscented, arasaratnam2010cubature}, TME \cite{zhao2020taylor} or SE-UKF \cite{lyons2014series}. Our approach retains the type-II Gaussian-closure structure for its stability and efficiency, but replaces the It\^o--Taylor discretization step with evaluations of Wiener-space cubature ODEs, which deterministically approximate the transition law itself to tunable high order, as we motivate next.

\section{Gaussian and Wiener Space Cubature} \label{sec:wsc}

The WSC-KF predict step is a product cubature rule. A spatial cubature rule of degree $r$ replaces the current Gaussian by $\scn$ weighted nodes in $\reals^{\xdim}$. Independently, a Wiener-space cubature rule of degree $m$ replaces the law of Brownian motion on an interval of length $\tau$ (e.g., $t_k - t_{k-1}$) by $\wcn$ deterministic paths. Each of the $(\scn,\wcn)$ pairs is integrated as an ODE of length $\tau$; the cloud is then collapsed to a Gaussian. Section~\ref{sec:spat_cub} recalls the spatial rule (equivalent to CD-CKF). Section~\ref{sec:ws_cub} constructs the path rule and states the weak error $O(\tau^{(m+1)/2})$ for a fixed initial condition $x_0$. In Section~\ref{sec:wsc-kf} these two cubature rules are combined into our filter architecture.

\subsection{Spatial Cubature}
\label{sec:spat_cub}
Our WSC-KF lies in the class of Gaussian approximation (type-II) filters. Accordingly, at each assimilation step we approximate the filtering distribution by a Gaussian 
$\hat{\pi}_k = \mathcal{N}(\bar{x}_k, P_k)$ and represent this Gaussian by a finite collection of deterministic sigma points (spatial cubature nodes); this sigma-point representation is motivated in the UKF \cite{wan2000unscented} and CD-CKF \cite{arasaratnam2010cubature}. Specifically, a \emph{spatial}
cubature rule of degree $r$ consists of nodes $\{\zeta_i\}_{i=1}^{\scn}\subset\reals^{\xdim}$ and weights $\{c_i\}_{i=1}^{\scn}$ satisfying 
\[\int_{\reals^{\xdim}} g(x) \,\mathrm{d}\hat{\pi}_k(x) = \sum_{i=1}^{\scn} c_i g(\zeta_i)\]
for every polynomial $g$ of degree at most $r$. Consequently, 
$\E_{\hat{\pi}_k}[\psi(x)] \approx \sum_{i=1}^{\scn} c_i \psi(\zeta_i)$ for sufficiently smooth test functions $\psi$. Many sigma-point approximations exist (and WSC-KF remains generally compatible with them all); throughout our numerics, we use the popular method of \cite{arasaratnam2010cubature}, which uses $\scn = 2\xdim$ equally weighted sigma points $\zeta_i^{\pm} = \bar{x}_k \pm \sqrt{\xdim} \hat{U}_k e_i, \quad c_i = \frac{1}{2\xdim}$
where $\hat{U}_k\hat{U}_k^\top = P_k$ and $e_i$ is the $i$'th unit basis vector in $\reals^{\xdim}$. Fifth-degree spatial cubature nodes are constructed in the high-order cubature Kalman filter \cite{jia2013high}; this spatial representation is also compatible with our algorithm. Our contribution is distinguished from \cite{arasaratnam2010cubature, jia2013high} entirely in \emph{how these spatial nodes are propagated forward through the model dynamics}, approximating the path-law of the forward dynamics \eqref{eq:model}\footnote{This will allow us to precisely isolate our contribution as an efficient and high-order procedure for this path-law approximation, improving on \cite{arasaratnam2010cubature} directly and other filters indirectly by inheriting the benefits of \cite{arasaratnam2010cubature}.}. Specifically, we employ \emph{cubature on Wiener space} to approximate this path-law. 

\subsection{Wiener Space Cubature}
\label{sec:ws_cub}
In this section we present preliminary details on Wiener space cubature paths and weights, which serve as high-order deterministic approximations for SDE solution moments. 


Throughout this subsection the initial condition $x_0\in \reals^{\xdim}$ is fixed as a Dirac delta distribution. Averaging over non-Dirac $p_0$ is achieved in Section~\ref{sec:wsc-kf}, where our full filter architecture is outlined. We write \eqref{eq:model} in Stratonovich form. Following \cite{kloeden1992numerical}, adjoin the drift as a zeroth vector field, so that there are $\bdim+1$ fields and $\bdim$ Brownian components: 
\begin{equation}
\mathrm{d}x_t = V_0(x_t)\,\mathrm{d}t + \sum_{i=1}^{\bdim} V_i(x_t) \circ \mathrm{d}W^i_t, \qquad x_t\big|_{t=0}=x_0
\label{eq:strat_sde}
\end{equation}
with $V_i: \reals^{\xdim} \to \reals^{\xdim}$, $i=0,\ldots,\bdim$. The sum in the Taylor formula below runs over the same index set $0,\ldots,\bdim$, with the convention $\mathrm{d}W^0_t := \mathrm{d}t$. 
Time-dependent $(f,G)$ reduce to~\eqref{eq:strat_sde} by state augmentation $x\leftarrow(t,x)$; we work with the autonomous form without loss of generality.

\textbf{Cubature Path Definition.} 
Let $\mathcal A_m$ be the index set
$\mathcal A_m =\bigl\{(i_1,\ldots,i_\ell)\in\{0,\ldots,d_b\}^\ell: \ell+\mathrm{card}\{j:i_j=0\}\le m \bigr\}.$
For $\psi\in C^\infty_b(\reals^{d_x};\reals)$
(bounded, infinitely differentiable), the weak Taylor
formula~\cite{kloeden1992numerical} gives, at time $\tau>0$,
\begin{align}
\begin{split}
&\mathbb{E}[\psi(x_\tau)]= \sum_{(i_1,\ldots,i_\ell)\in\mathcal A_m} \left(V_{i_1:i_\ell}\psi\right)(x_0)\\&\,\,\,\cdot\mathbb E\Bigl[\int_{0<s_1<\cdots<s_\ell<\tau}\circ\mathrm{d}W^{i_1}_{s_1}\cdots\circ\mathrm{d}W^{i_\ell}_{s_\ell}\Bigr]
  + O\bigl(\tau^{(m+1)/2}\bigr). \raisetag{50pt}
\label{eq:sol_exp}
\end{split}
\end{align}
Here each $V_i$ acts on the function $\psi$ as a differential operator; specifically $(V_i\psi)(x) := \langle\nabla\psi(x), V_i(x)\rangle_{\reals^{\xdim}}$ and we write $V_{i_1:i_\ell}\psi := V_{i_1}(V_{i_2}(\cdots (V_{i_l}\psi)\cdots))$ as the function formed by iterating this differentiation along each indexed vector field.\footnote{The iterated Stratonovich integral is computed as: $\int_{0<s_1<s_2<\tau}\circ dW_{s_1}^{i_1} \circ dW_{s_2}^{i_2} = \int_{0 < s_2 < \tau} \left(\int_{0<s_1< s_2}\circ dW_{s_1}^{i_1}\right) \circ dW_{s_2}^{i_2}$, with the entire mixed integral extending this to iteration over all indices.}
The times \(s_1<\cdots<s_\ell\) parametrize the simplex of
an iterated integral of length \(\ell\le m\) (and, in particular, are not directly related to the observation times or the grid of our numerical integration).

\begin{definition}[Cubature Paths and Weights~\cite{lyons2004cubature}]
\label{def:cub}
Let $C^{0}_{\mathrm{bv}}([0,\tau];\mathbb R^{\bdim})$ denote
continuous paths in $\reals^{\bdim}$ of bounded variation on $[0,\tau]$. A collection of paths $w_1,\ldots,w_{N_c} \in C^{0}_{\mathrm{bv}}([0,\tau];\mathbb R^{\bdim}) $ in that space, together with weights $\lambda_1,\ldots,\lambda_{N_c}>0$ s.t. $\sum_{j=1}^{N_c}\lambda_j = 1$, defines a cubature formula of degree $m$ at time $\tau$ if, for every $(i_1,\ldots,i_\ell)\in\mathcal A_m$,
\begin{align}
\begin{split}
  &\mathbb E\Bigl[
    \int_{0<s_1<\cdots<s_\ell<\tau} \circ\mathrm{d}W^{i_1}_{s_1}\cdots\circ\mathrm{d}W^{i_\ell}_{s_\ell} \Bigr] \\&= \sum_{j=1}^{N_c} \lambda_j \int_{0<s_1<\cdots<s_\ell<\tau} \mathrm{d}w^{i_1}_j(s_1)\cdots\mathrm{d}w^{i_\ell}_j(s_\ell).
\label{eq:itr_eq}
\end{split}
\end{align}
\end{definition}
\vspace{-0.5cm}
In \cite{lyons2004cubature} it is proved that, for any $m\in\N$ and $\wct\geq 0$, there exist ${\wcn} \leq \text{card}(\mathcal{A}_m)$ cubature paths and weights\footnote{$\text{card}(\cdot)$ denotes cardinality.} defining a cubature formula on Wiener space of degree $m$ at time $\wct$. Furthermore, \emph{explicit} constructions of cubature paths and weights of degree $3, 5$ and $7$, for any $\tau>0$, have been developed \cite{lyons2004cubature,litterer2026high}. The key insight is that \emph{we may gain a high-order approximation of   $\E[\tf(x_\wct)]$ by solving a weighted system of ordinary differential equations} with respect to cubature paths $\{w_j\}_{j=1}^{\wcn}$. Indeed, we capture this insight in Proposition~\ref{prop:cub_ode}. First, we must specify a standard structural assumption on the vector fields $\{\V{V}_i\}_{i=0}^{\bdim}$.

\begin{assumption}[Smoothness, Polynomial Growth]
\label{as:smooth}
For a fixed Wiener cubature degree $m$, the vector
fields $\V{V}_0,\ldots,\V{V}_{\bdim}$ in \eqref{eq:strat_sde} belong to
$C^{m+1}(\mathbb{R}^{\xdim};\mathbb{R}^{\xdim})$, and are globally Lipschitz continuous in x.
Moreover, for every multi-index $\alpha$ satisfying
$|\alpha|\le m+1$, there exist constants
$C_\alpha,\rho_\alpha>0$ such that $\|\partial^\alpha \V{V}_i(x)\|
\le
C_\alpha\bigl(1+\|x\|^{\rho_\alpha}\bigr),\,
i=0,\ldots,\bdim$.
\end{assumption}

\begin{prop}[Cubature ODE Approximation \cite{lyons2004cubature}]
\label{prop:cub_ode}
Let $\Phi_{\wct,x_0}(w)$ be the solution at time $\wct$ of the ODE
\begin{equation}
\label{eq:cub_ode}
\mathrm{d}\widetilde x_t =  \sum_{i=0}^{\bdim}\V{V}_i(\widetilde x_t)\mathrm{d}w^i(t), \quad \widetilde x_0 = x_0.
\end{equation}
with the convention $\mathrm{d}w^0(t):=\mathrm{d}t$.
Then, under Assumption~\ref{as:smooth}, we have, for any bounded and infinitely differentiable function $\psi \in C_b^{\infty}$, $\psi : \reals^{\xdim} \to \reals$,
\begin{equation}
\label{eq:cub_approx}
    \bigg|\E(\tf(x_{\wct})) - \sum_{j=1}^{\wcn} \lambda_j \tf(\Phi_{\wct,x_0}(w_j))\bigg| = O(\wct^{(m+1)/2})
\end{equation}
\end{prop}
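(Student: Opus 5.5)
The plan is to run the same stochastic Taylor expansion on both sides of \eqref{eq:cub_approx} and let the cubature identity \eqref{eq:itr_eq} cancel every term up to degree $m$. Only the two remainders then have to be bounded.

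\emph{Step 1 (SDE side).} I would invoke the weak Taylor formula \eqref{eq:sol_exp}. I would write $\psi(x_\tau)$ as the truncated Stratonovich--Taylor sum over $\mathcal A_m$ plus a remainder $R_m^{W}$. This remainder is a finite sum of iterated Stratonovich integrals whose multi-indices $(i_1,\ldots,i_\ell)\notin\mathcal A_m$ are minimal with respect to the degree $\ell+\mathrm{card}\{j:i_j=0\}$. Each integrand has the form $(V_{i_1:i_\ell}\psi)(x_{s_1})$, evaluated along the path.

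Under Assumption~\ref{as:smooth}, with $\psi\in C_b^\infty$, the functions $V_{i_1:i_\ell}\psi$ have at most polynomial growth. Global Lipschitz continuity gives $\sup_{s\le\tau}\E\|x_s\|^p<\infty$ for all $p$. Brownian scaling shows that an iterated integral of degree $k$ has $L^2$ norm of order $\tau^{k/2}$. Combining these with Cauchy--Schwarz, each remainder term is $O(\tau^{(m+1)/2})$. (The degree $m+2$ terms arising from a terminal $0$ index are smaller still.)

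\emph{Step 2 (ODE side).} For each path $w_j$, the ODE \eqref{eq:cub_ode} is driven by a bounded-variation path. The ordinary chain rule therefore gives the same Taylor expansion exactly, with $\circ\mathrm{d}W$ replaced by $\mathrm{d}w_j$:
\[
\psi(\Phi_{\tau,x_0}(w_j))=\sum_{(i_1,\ldots,i_\ell)\in\mathcal A_m}(V_{i_1:i_\ell}\psi)(x_0)\int_{0<s_1<\cdots<s_\ell<\tau}\mathrm{d}w_j^{i_1}\cdots\mathrm{d}w_j^{i_\ell}+R_m^{w_j}.
\]
I would multiply by $\lambda_j$, sum over $j$, and subtract the expectation from Step 1. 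By \eqref{eq:itr_eq} every term indexed by $\mathcal A_m$ cancels, which leaves $|\E R_m^W|+\sum_j\lambda_j|R_m^{w_j}|$.

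\emph{Step 3 (the main obstacle).} The hard part is bounding $R_m^{w_j}$ by $O(\tau^{(m+1)/2})$, which requires the cubature paths to scale correctly. I would use the standard construction: $w_j(t)=\sqrt{\tau}\,\omega_j(t/\tau)$ for a fixed degree-$m$ rule $\{\omega_j\}$ on $[0,1]$. This is exactly how Brownian scaling transfers \eqref{eq:itr_eq} from time $1$ to time $\tau$, and the explicit degree $3,5,7$ constructions have this form.

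With this scaling, the total variation of $w_j^i$ on $[0,\tau]$ is $\sqrt{\tau}\,\|\omega_j^i\|_{1\text{-var}}$. An iterated integral of degree $k$ along $w_j$ is then bounded by $C\tau^{k/2}$. The solution $\widetilde x$ stays in a ball around $x_0$ whose radius is uniform for $\tau\le 1$ (by a Gronwall argument using the Lipschitz bound). On that ball, the polynomially growing integrands $V_{i_1:i_\ell}\psi$ are bounded. Hence $|R_m^{w_j}|\le C(x_0)\tau^{(m+1)/2}$. Since $N_c$ is finite and the weights sum to one, the proof is complete for $\tau\le1$. For $\tau>1$ the bound is trivial because $\psi$ is bounded.
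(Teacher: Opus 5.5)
Your proposal is correct and follows the same route the paper relies on. The paper's justification is the Lyons--Victoir argument: Taylor-expand the SDE and each cubature ODE, cancel every $\mathcal A_m$ term via Definition~\ref{def:cub}, and bound the two remainders. Your handling of polynomial growth (moment bounds on the SDE side, a Gronwall bound on the ODE side) mirrors Lemma~\ref{lem:Lyons}.

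Your explicit scaling $w_j(t)=\sqrt{\tau}\,\omega_j(t/\tau)$ is not a deviation from the paper. It is the hypothesis, implicit in the cited Lyons--Victoir construction, that makes the remainder constant uniform in $\tau$, so stating it is appropriate.
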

\vspace{-0.3cm}
Proposition~\ref{prop:cub_ode} follows by expanding the ODE \eqref{eq:cub_ode} through its Taylor expansion, see \cite{lyons2004cubature}; indeed, the iterated integrals of cubature paths match the expected iterated Brownian integrals in \eqref{eq:itr_eq} by Definition~\ref{def:cub}. 


\textbf{Path Recombination.}
\label{sec:recom}
The error in~\eqref{eq:cub_approx} is for one local cubature formula of length $\tau$.  To reduce it one may partition
$[0,\tau]$ into $N_\tau$ subintervals $0=\tau_0<\cdots<\tau_{N_\tau}=\tau$ and concatenate the
same degree-$m$ formula on each piece \cite[Proposition 3.2]{lyons2004cubature}.  This grid is a forecast-interval partition; it is distinct from the observation times \(t_k\) and from the iterated-integral simplex \(s_1<\cdots<s_\ell\)
in~\eqref{eq:itr_eq}.  This concatenated collection comprises
\(N_c^{N_\tau}\) paths. Writing \(\delta_n:=\tau_n-\tau_{n-1}\), the resulting weak error from solving \eqref{eq:cub_ode} along this collection of concatenated paths is
\begin{equation}
\label{eq:recom_err}
O\Biggl(\delta_{N_\tau}^{1/2} + \sum_{n=1}^{N_\tau-1} \delta_n^{(m+1)/2}\,(\tau-\tau_n)^{-m/2}\Biggr).
\end{equation}
High-order recombination~\cite{litterer2012high} trims the number of these concatenated paths while preserving moment approximations, reducing the cardinality from exponential in \(N_\tau\) to polynomial; for brevity we refer to~\cite{snow2025efficient} for the construction and analysis. These recombination methods form a complementary and orthogonal set of numerical techniques that can be used in an implementation of WSC-KF. 

\section{Wiener-Space Cubature Kalman Filter}
\label{sec:wsc-kf}
Here we develop the Wiener-Space Cubature Kalman Filter (WSC-KF), and discuss its structural advantages and trade-offs relative to existing approaches. 

\subsection{Preliminaries}

The Stratonovich fields $V_0,\ldots,V_{\bdim}$ are those of
Section~\ref{sec:wsc}.  Observations $y_1,\dots, y_K$ arrive at $t_1,\ldots,t_K$ with forecast
length $\tau:=t_k-t_{k-1}$.\footnote{We assume this interval is fixed over $k\in\N$. However, our filter straightforwardly extends to non-uniform observation intervals by scaling the cubature paths appropriately (cf. \cite[Proposition 2.5]{lyons2004cubature})}
%
%
Recall $\Phi_{\tau,x_0}(w)$ is the solution at time $\wct$ of \eqref{eq:cub_ode}, with initial condition $x_0$ and cubature paths and weights $\{w_j, \lambda_j\}_{j=1}^{\wcn}$ of degree $m$ at time $\tau$,\footnote{Observe that $\{w_j, \lambda_j\}_{j=1}^{\wcn}$ are \textit{independent} from the Stratonovich vector fields $\{\V{V}_i\}_{i=0}^{\bdim}$, and can be pre-computed once in a pre-processing stage before filtering. For brevity, we leave the complexity of recombined construction of these paths to the exposition in \cite[Sec. 2--3]{snow2025efficient}.} that is
\begin{equation}
\label{eq:Phi_dyn}
\Phi_{\tau,x_0}(w) = x_0 + \int_0^{\wct}\sum_{i=0}^{\bdim}\V{V}_i\left(\Phi_{s,x_0}(w)\right)dw^i(s)
\end{equation}
The WSC-KF is a Gaussian approximation filter, that is, at each $k$ we form a Gaussian $\hat\pi_k = \mathcal{N}(\bar{\V{x}}_k, \V{P}_k)$, with mean $\bar{\V{x}}_k$ and covariance $P_k$, to approximate $\pi_k(\dx)$. Updating from $\hat\pi_{k-1}$ to $\hat\pi_k$ proceeds in the following steps:

\subsection{WSC-KF Algorithm}
\textbf{Spatial Cubature Approximation.} We form $\scn$ spatial cubature nodes $\{\zeta_i\}_{i=1}^{\scn}$, with weights $\{c_i\}_{i=1}^{\scn}$, approximating $\hat{\pi}_{k-1} = \mathcal{N}(\bar{x}_{k-1},P_{k-1})$. A standard choice, matching the CD-CKF \cite{arasaratnam2010cubature}, is to take $\scn=2\xdim$ third-degree spherical-radial cubature nodes, as in \cite{arasaratnam2010cubature} and recalling Section~\ref{sec:spat_cub}. Fifth-degree spatial cubature node constructions are provided in \cite{jia2013high}. 

\textbf{Predict Step (Wiener Cubature Forecasting)}. For each spatial node $\zeta_i$ and Wiener cubature path $w_j$, integrate the controlled Stratonovich ODE \eqref{eq:Phi_dyn} to evaluate \[x_{i,j} := \Phi_{\tau,\zeta_i}(w_j),\] assign the weight $\alpha_{i,j} := c_i\lambda_j$, and collapse this weighted cloud to approximate Gaussian moments:
\begin{align}
\begin{split}
\label{eq:fore_stat}
    \bar{x}_{k|k-1} &= \sum_{i=1}^{\scn}\sum_{j=1}^{\wcn} \alpha_{i,j} x_{i,j} \\
    P_{k|k-1} &= \sum_{i,j}\alpha_{i,j}(x_{i,j} - \bar{x}_{k|k-1})(x_{i,j} - \bar{x}_{k|k-1})^{\top} \raisetag{50pt}
\end{split}
\end{align}
\textbf{Update Step.} Assume the conditional observation moments of~\eqref{eq:obs} exist and
are finite,
\[\mu_k(x):=\mathbb E[y_k\mid x_{t_k}=x],\quad  S_k(x):=\mathrm{Cov}(y_k\mid x_{t_k}=x).
\]
Set $\bar z_{k|k-1} = \sum_{i,j}\alpha_{i,j}\,\mu_k(x_{i,j})$, $P_{zz} = \sum_{i,j}\alpha_{i,j} \Bigl(S_k(x_{i,j}) +(\mu_k(x_{i,j})-\bar z_{k|k-1}) (\mu_k(x_{i,j})-\bar z_{k|k-1})^\top \Bigr)$, $P_{xz} = \sum_{i,j}\alpha_{i,j}(x_{i,j}- \bar x_{k|k-1}) (\mu_k(x_{i,j})-\bar z_{k|k-1})^\top$.
The Kalman gain and corrected state are then
$K_k = P_{xz}P_{zz}^{-1},\,\, \bar x_k = \bar x_{k|k-1} + K_k(y_k-\bar z_{k|k-1}),\,\, P_k = P_{k|k-1}-K_k P_{zz}K_k^\top$.
Additive Gaussian \(y_k=h(x_{t_k})+\varepsilon_k\) with
\(\varepsilon_k\sim N(0,R_k)\) independent of \(x\) is the special case
\(\mu_k=h\), \(S_k\equiv R_k\).

\subsection{WSC-KF Analysis}

In this section we provide our main theorem, which relates the accuracy of the WSC-KF filter to that of the CD-CKF, which is its most structurally related predecessor. Throughout we let $\|\cdot\|$ denote $L_2$ norm, and $\|\cdot\|_F$ denote Frobenius norm.

\textbf{Specification 1}: 
Fix $k$ and let $\hat\pi_{k-1}=\mathcal N(\bar x_{k-1},P_{k-1})$ be the Gaussian
approximation entering the predict step. Let $\{\zeta_i\}_{i=1}^{\scn}, \{c_i\}_{i=1}^{\scn}$ be spatial cubature nodes and weights as in Section~\ref{sec:spat_cub}. Write $\mu(x):=\mathbb E[x_\tau\mid x_0=x],\,\Sigma(x):=\mathrm{Cov}(x_\tau\mid x_0=x)$ for the exact one-step conditional mean and covariance of \eqref{eq:model}. Define the \emph{exact one-step targets}, obtained by pushing $\hat\pi_{k-1}$ through the exact transition law,
$x_{k|k-1} := \int \mu(x)\,\hat\pi_{k-1}(dx), \Sigma_{k|k-1} := \int \Sigma(x)\,\hat\pi_{k-1}(dx) + \int \big(\mu(x)-x_{k|k-1}\big)^{\otimes 2}\,\hat\pi_{k-1}(dx),$
where $a^{\otimes 2}:=aa^\top$. Let $\tilde x_{k|k-1}$ and $\tilde\Sigma_{k|k-1}$ be
the same quantities with $\hat\pi_{k-1}$ replaced by its spatial-cubature point-mass representation, still evaluated against the exact transition law:
\begin{align*}
  \tilde x_{k|k-1} &:= \textstyle\sum_{i=1}^{\scn}c_i\mu(\zeta_i), \\
  \tilde\Sigma_{k|k-1} &:= \textstyle\sum_{i=1}^{\scn}c_i\Sigma(\zeta_i) + \textstyle\sum_{i=1}^{\scn}c_i\big(\mu(\zeta_i)-\tilde x_{k|k-1}\big)^{\otimes 2}.
\end{align*}
The spatial-cubature errors are then
\begin{equation*}
  \varepsilon_{\scn,x}:=\big\|\tilde x_{k|k-1}-x_{k|k-1}\big\|,\varepsilon_{\scn,\Sigma}:=\big\|\tilde\Sigma_{k|k-1}-\Sigma_{k|k-1}\big\|_F.
\end{equation*}
They are incurred purely by the spatial rule and depend only on $\scn$ and
$\hat\pi_{k-1}$, not on how $\mu,\Sigma$ are subsequently approximated. In
particular they are identical for WSC-KF and for the CD-CKF of \cite{arasaratnam2010cubature},
since both use the same spatial rule. 

The following is our main result, providing higher-order asymptotic approximation guarantees over the CD-CKF in the predict step. 



\begin{thm}[WSC-KF Error Decomposition]
\label{thm:WSC-KF-error}
Under Assumption~\ref{as:smooth}, the notation of Specification 1 and the WSC-KF Algorithm with Wiener-cubature degree $m$ and $\wct_0 > 0$, there exist constants $C_x, C_\Sigma$ independent of $\tau \in (0,\tau_0]$ such that
\begin{align}
\begin{split}
\big\| \bar{x}_{k|k-1} - x_{k|k-1} \big\| &\;\le\; \varepsilon_{\scn,x} + C_x\, \wct^{(m+1)/2}, \\
\big\| P_{k|k-1} - \Sigma_{k|k-1} \big\|_F &\;\le\; \varepsilon_{\scn,\Sigma} + C_\Sigma\, \wct^{(m+1)/2}.
\label{eq:thmdecomp}
\end{split}
\end{align}
where $\|\cdot\|$ denotes $L^2$ norm, and $\|\cdot\|_F$ denotes Frobenius norm.
\end{thm}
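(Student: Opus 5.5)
The plan is a triangle inequality. We insert the intermediate quantities $\tilde x_{k|k-1}$ and $\tilde\Sigma_{k|k-1}$ of Specification 1, which push the spatial nodes through the \emph{exact} transition law:
\[
\|\bar x_{k|k-1}-x_{k|k-1}\|\le \|\tilde x_{k|k-1}-x_{k|k-1}\| + \|\bar x_{k|k-1}-\tilde x_{k|k-1}\|,
\]
and similarly for the covariance in Frobenius norm. The first term is $\varepsilon_{\scn,x}$ (respectively $\varepsilon_{\scn,\Sigma}$) by definition. It then remains to bound the Wiener-cubature discrepancy at the finitely many fixed nodes $\zeta_i$ by $C\tau^{(m+1)/2}$.

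For the mean, write $\bar x_{k|k-1}-\tilde x_{k|k-1}=\sum_i c_i\big(\sum_j\lambda_j\Phi_{\tau,\zeta_i}(w_j)-\mu(\zeta_i)\big)$. Applying Proposition~\ref{prop:cub_ode} with $\psi$ equal to each coordinate map $x\mapsto x^{(p)}$ gives the rate $O(\tau^{(m+1)/2})$ at each node. These maps are not in $C_b^\infty$, so I would use the polynomial-growth form of the weak Taylor expansion~\eqref{eq:sol_exp}. Under Assumption~\ref{as:smooth} (global Lipschitz, derivatives of polynomial growth), the remainder is controlled by moments $\E\sup_{t\le\tau}\|x_t\|^q$ and $\sup_j\sup_{t\le\tau}\|\Phi_{t,\zeta_i}(w_j)\|^q$, both uniform in $\tau\in(0,\tau_0]$. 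The first uses standard SDE moment bounds. The second uses Gronwall on~\eqref{eq:Phi_dyn}, noting that cubature paths on $[0,\tau]$ are scaled copies of paths on $[0,1]$ with total variation $O(\tau^{1/2})$. Alternatively, one can truncate: replace $\psi$ by a smooth compactly supported modification agreeing with the identity on a large ball, and control the tails by those moment bounds. Since the nodes are finite in number and fixed, taking the maximum over $i$ and using $\sum_i c_i|\cdot|$ yields $C_x$ depending on $\bar x_{k-1},P_{k-1}$ but not on $\tau$. For negative-weight spatial rules, $\sum_i|c_i|$ replaces $\sum_i c_i$.

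For the covariance, rewrite both $P_{k|k-1}$ and $\tilde\Sigma_{k|k-1}$ in raw second-moment form. By the law of total covariance, $\tilde\Sigma_{k|k-1}=\sum_i c_i M_2(\zeta_i)-\tilde x\tilde x^\top$ with $M_2(x):=\E[x_\tau x_\tau^\top\mid x_0=x]=\Sigma(x)+\mu(x)\mu(x)^\top$. Likewise $P_{k|k-1}=\sum_{i,j}\alpha_{i,j}x_{i,j}x_{i,j}^\top-\bar x\bar x^\top$, using $\sum\alpha_{i,j}=1$. The second-moment difference is handled exactly as the mean, with $\psi(x)=x^{(p)}x^{(q)}$ (again polynomial growth), giving $O(\tau^{(m+1)/2})$. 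The outer-product difference is handled by $\bar x\bar x^\top-\tilde x\tilde x^\top=(\bar x-\tilde x)\bar x^\top+\tilde x(\bar x-\tilde x)^\top$, where $\|\bar x\|$ and $\|\tilde x\|$ are bounded uniformly in $\tau\le\tau_0$ by the moment bounds above. Collecting terms gives $C_\Sigma$.

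The main obstacle is justifying Proposition~\ref{prop:cub_ode} for the unbounded test functions $x$ and $xx^\top$ with a constant uniform over $\tau\in(0,\tau_0]$. This requires checking that the Taylor remainders of both the SDE and the cubature ODE are bounded by $\tau^{(m+1)/2}$ times polynomial moments, using the scaling of the cubature paths and $C^{m+1}$ regularity. I would first try to redo the Lyons--Victoir remainder estimate directly for polynomial $\psi$, since $V_{i_1:i_\ell}\psi$ then has polynomial growth by Assumption~\ref{as:smooth}. I would fall back on the truncation argument only if that fails.
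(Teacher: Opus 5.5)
Your proposal is correct and follows the paper's proof. It uses the same triangle inequality through $\tilde x_{k|k-1}$ and $\tilde\Sigma_{k|k-1}$, controls the Wiener-cubature term by extending Lyons--Victoir's Prop.~3.2 to $\chi(x)=x_p$ and $x_px_q$ via SDE moment bounds and Gronwall on the cubature ODEs (the paper's Lemma~\ref{lem:Lyons}), and uses a $\tau$-uniform bound on the means (Lemma~\ref{lem:usb}).

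The only difference is organizational. You rewrite the covariance globally in raw second moments using $\sum_i c_i=\sum_{i,j}\alpha_{i,j}=1$, whereas the paper bounds the per-node covariances $P(\zeta_i)-\Sigma(\zeta_i)$ and then the between-node terms separately via the law of total variance. Your Brownian-scaling remark, $|w_j|_{1\text{-var}}=O(\tau^{1/2})$, also makes explicit the uniformity in $\tau$ that the paper's Gronwall step leaves implicit.
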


The proof of Theorem~\ref{thm:WSC-KF-error} can be found in Appendix~\ref{ap:main_pf}, and rests on a generalization (Lemma~\ref{lem:Lyons}) of \cite[Prp. 3.2]{lyons2004cubature} to handle certain polynomial-growth functions, enabling uniform Markov semigroup bounds under Assumption~\ref{as:smooth}.

The $O(\tau^{(m+1)/2})$ remainder is the local weak rate
on an interval of length $\tau$.  For $\tau\geq 1$ the
exponent $(m+1)/2$ no longer improves the bound as $m$
increases; one instead applies the same formula on each
piece of a partitioned grid, as in the recombination procedure of Section~\ref{sec:ws_cub}, in which case the error rate \eqref{eq:recom_err} is inherited.  
 
\begin{cor}[Comparison to CD-CKF]
\label{cor:cdckf-comparison}
Assume the hypotheses of Theorem~\ref{thm:WSC-KF-error}, and take the CD-CKF path map to be a first-order It\^o step (the state-dependent diffusion extension of \cite{arasaratnam2010cubature}). Then the path term of that CD-CKF is $O(\tau^2)$.  WSC-KF of degree $m=3$ has the same path order; of degree $m\ge 5$, the path term is $O(\tau^{(m+1)/2})$. The comparison is of local weak remainders for one predict step of length $\tau$.
\end{cor}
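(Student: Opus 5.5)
The plan is to reduce both statements to local weak-error estimates at each spatial node, and then to compare the Euler–Maruyama step with the Wiener cubature rule of Theorem~\ref{thm:WSC-KF-error}.

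\textbf{Step 1: reduce to the per-node path term.} Both filters share the spatial rule. By the triangle inequality, as in the decomposition \eqref{eq:thmdecomp}, each filter's error splits into $\varepsilon_{\scn,x}$ (resp. $\varepsilon_{\scn,\Sigma}$) plus a \emph{path term}. The path term is the discrepancy between the spatially-weighted approximate moments and the exact moments $\tilde x_{k|k-1},\tilde\Sigma_{k|k-1}$. Because the nodes are fixed and finite, the path term is controlled by the per-node weak errors of the approximate transition law.

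These errors are measured on the test functions $x\mapsto x$ and $x\mapsto x x^\top$. They have polynomial growth, so I will invoke the polynomial-growth extension (Lemma~\ref{lem:Lyons}) used in the proof of Theorem~\ref{thm:WSC-KF-error}. The covariance part is handled by writing $P$ as the second moment minus the outer product of the mean, which is a bilinear perturbation. Its error is then bounded by the first- and second-moment errors times bounded factors; these factors are uniform over $\tau\le\tau_0$ and over the finitely many nodes.

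\textbf{Step 2: the Euler–Maruyama rate.} For the CD-CKF with a first-order It\^o step $x_\tau\approx\zeta+f(\zeta)\tau+G(\zeta)\Delta W$, I will Taylor-expand $\E\psi$ of both the exact solution and the Euler step via the generator $L$. The exact expansion is $\E\psi(x_\tau)=\psi+\tau L\psi+O(\tau^2)$. The Euler step matches the terms through $O(\tau)$: its mean drift is $f\tau$, its conditional covariance is $GG^\top\tau$, and its odd Gaussian moments vanish. The discrepancy therefore starts at $\tau^2$. Polynomial growth again requires uniform moment bounds, and these follow from global Lipschitz continuity in Assumption~\ref{as:smooth}. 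Hence the CD-CKF path term is $O(\tau^2)$. To see this is generically sharp (and not better), I will exhibit one term that Euler misses: the $\tfrac{\tau^2}{2}L^2\psi$ contribution, which includes the $V_iV_j$ interactions.

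\textbf{Step 3: compare.} From Theorem~\ref{thm:WSC-KF-error}, WSC-KF has path term $C\tau^{(m+1)/2}$. At $m=3$ this equals $\tau^2$, the same order as the CD-CKF. For $m\ge5$ it is $\tau^{(m+1)/2}\le \tau^3\,\tau_0^{(m-5)/2}$, which is strictly smaller order than $\tau^2$ as $\tau\to0$. Since the spatial terms are identical for the two filters, the comparison concerns the path terms only, and only locally over one predict step.

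\textbf{Main obstacle.} The hard step is the rigorous Euler remainder bound with polynomially growing test functions $x x^\top$ in Step 2. I would try first to control the Taylor remainders with uniform moment bounds $\sup_{s\le\tau_0}\E\|x_s\|^p<\infty$ and the polynomial-growth derivative bounds. The same Step-1 reduction also needs a stability estimate for the covariance map under moment perturbations.
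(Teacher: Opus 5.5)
Your proposal is correct and follows the paper's route. The corollary comes from Theorem~\ref{thm:WSC-KF-error} together with the standard $O(\tau^2)$ local weak remainder of the first-order It\^o (Euler) step; the spatial terms $\varepsilon_{\scn,x},\varepsilon_{\scn,\Sigma}$ are identical for both filters, so the comparison reduces to $(m+1)/2$ versus $2$. The paper simply takes the Euler rate as standard, whereas you sketch it via a generator expansion. For the sharpness remark, note that Euler does not omit the whole $\tfrac{\tau^2}{2}L^2\psi$ term for $x\mapsto xx^\top$ (it captures the $ff^\top\tau^2$ part), but the mean test function $x\mapsto x$ already suffices: the Euler mean map has no $\tau^2$ term, while the exact one carries $\tfrac{\tau^2}{2}Lf$.
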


Both filters pay \emph{exactly} the same spatial-discretization price $\varepsilon_{\scn,x}, \varepsilon_{\scn,\Sigma}$, as neither the choice of $m$ nor the path-space cubature affects the spatial cubature rule. Corollary~\ref{cor:cdckf-comparison} is not a comparison with the additive It\^o--Taylor $1.5$ moment map of \cite{arasaratnam2010cubature}, which is closed form for \emph{additive} process noise and has local remainder $O(\tau^3)$. For \emph{state-dependent} process noise (our focus) the usual CD-CKF step is first-order It\^o; weak order-$2$ It\^o--Taylor schemes that avoid iterated integrals do so by replacing them with auxiliary random
variables: they do not match the mixed integrals in~\eqref{eq:sol_exp}, and a sampled implementation reintroduces Monte Carlo into the predict step. The Ninomiya--Victoir scheme \cite{ninomiya2008weak} is a degree-$5$ cubature and has the same global weak order as $m=5$; WSC-KF uses that cubature class as a deterministic, degree-$m$ product with the CD-CKF spatial rule.

\subsection{Discussion. Path Evaluation Complexity.} The guarantee of Theorem~\ref{thm:WSC-KF-error} demonstrates advantageous \emph{asymptotic} accuracy; this is the formal regime in which the accuracy of these prediction operations, such as truncated It\^o-Taylor expansions, are to be compared \cite{arasaratnam2010cubature}. However, in practice one will be interested in a non-negligible $\tau > 0$. Then, as discussed in Section~\ref{sec:recom}, one may tailor the path construction to obtain a desirable guarantee by subdividing $[0,\tau]$ into a finite grid. Furthermore, the advantage of this procedure necessitates $\scn \wcn$ path evaluations. We contrast with ensemble methods, such as the EnKF and PF, which incur a path evaluation along each of $n_\text{ens}$ particles, typically with $n_\text{ens} \gg \scn$. Thus, the complexity overhead of the WSC-KF can be isolated to the total number of cubature paths $\wcn$; for brevity, we refer to \cite{snow2025efficient} for in-depth discussion of the construction of these paths, their dependence on the Brownian dimension, and approaches for reducing this overhead through high-order recombination \cite{litterer2012high}.\footnote{It is also shown in \cite{snow2025efficient} that the RMSE of approximating SDE functional statistics by Wiener cubature evaluations can scale as $O(\wcn^{-1})$, in contrast to Monte Carlo scaling of $O(n_{mc}^{-1/2})$, with $n_{mc}$ the number of stochastic Monte Carlo paths. This is a complementary advantage of our method, which may be exploited in future filtering designs.}

\section{Numerical Examples.}
\label{sec:num}

We report experiments on the It\^o system
\begin{equation} \label{eq:test_system}
  \mathrm{d}x^{(1)}=\mathrm{d}W^{(1)},\qquad \mathrm{d}x^{(2)}=\sin(x^{(1)})\,\mathrm{d}W^{(2)},
\end{equation}
started at $(1, 0)$. The vector fields are clearly globally Lipschitz and $C^\infty$, and the hypotheses of Theorem~\ref{thm:WSC-KF-error} hold.  Figure~\ref{fig:numerics}(a--b) illustrate the weak rate of Theorem~\ref{thm:WSC-KF-error}, and Figure~\ref{fig:numerics}(c) score the same coordinate as a sequential nonlinear observation, against TME, the SE-UKF, the CD-CKF, the CD-EnKF, and a bootstrap particle filter. All runs use JAX on CPU, 64-bit arithmetic, and classical RK4.
WSC-KF uses degree-$3$ spherical cubature in the state
($\scn=2d_x=4$) and explicit Lyons--Victoir paths \cite{lyons2004cubature}.  Concatenation order $N_\tau$ does not raise the cubature degree $m$.

\begin{figure*}[t]
    \centering
    \includegraphics[width=0.98\linewidth]{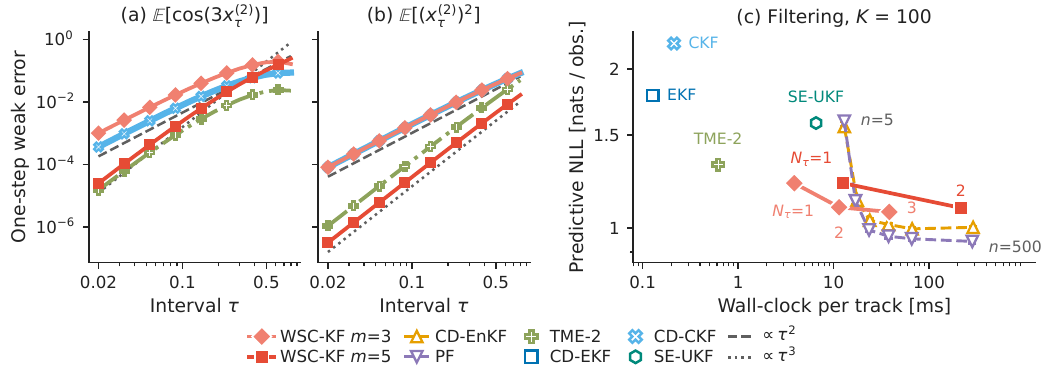}
    \caption{\textbf{Numerical Results.} On the left, one step weak error of the predict step with test functions \textbf{(a)} $\cos(3 x^{(2)})$ and \textbf{(b)} ${(x_\tau^{(2)})}^2$, both with Dirac law for $x_0$. On the right, \textbf{(c)} sequential mixture predictive NLL of noisy observations $y = \cos(3 x^{(2)}) + \varepsilon$.}
    \label{fig:numerics}
\end{figure*}

\subsection{Weak-Order Rates Confirmation}
\label{sec:levy-weak}

To isolate the Lyons--Victoir contribution from spatial cubature, we take a single predict step from a Dirac law $P_0=0$. We illustrate the weak error with respect to two test functions: $f(x) = \cos\bigl(3 x_\tau^{(2)}\bigr)$ and ${\bigl(x_\tau^{(2)}\bigr)}^2$. In either case, the test function clearly belongs to $C^\infty$. We estimate the ground truth value of the former via a Fourier expansion, and the latter is available in closed form.

Theorem~\ref{thm:WSC-KF-error} gives local weak error $O(\tau^{(m+1)/2})$ for WSC-KF, hence $O(\tau^2)$ at $m=3$ and $O(\tau^3)$ at $m=5$; Corollary~3 gives $O(\tau^2)$ for the CD-CKF. Figure~\ref{fig:numerics}(a--b) reports the weak error
against $\tau$ for $(m,N_\tau)=(3,1)$ and $(5,1)$, where recall $N_\tau$ is the subinterval count in the concatenated path construction in Section~\ref{sec:ws_cub}. We also show lines proportional to the predicted $\tau^2$ and $\tau^3$ error rates; in both cases, the CD-CKF and WSC-KF closely follow the predicted rate, especially at small intervals $\tau$.

\subsection{Demonstration of Path Law Matching}
\label{sec:levy}
The previous section discussed the weak error of each predict step; this plays a part in strong filtering performance, but is not itself sufficient. We next test filtering performance directly.

Figure~\ref{fig:numerics}(c) reports sequential mixture predictive negative log-likelihood (NLL) of the nonlinear observations
\begin{equation}
  y_k=\cos\bigl(3 x^{(2)}_{t_k}\bigr)+\varepsilon_k, \quad R^{1/2}=0.25,
\end{equation}
averaged over sixty independent observation trajectories of length $K = 100$.  Each method is scored on its predictive cloud, $-\log\sum_j w_j\mathcal N(y_k;h(x_j),R)$, rather than on a Kalman Gaussian of $y$ obtained after collapsing that cloud.  The observation map is a bounded nonlinear functional of $x^{(2)}$: matching the first two moments of $x$ or linearizing $h$ does not determine the law of $y$.

EKF, CKF, and the SE-UKF all suffer the most error in this example, all with $\mathrm{NLL} > 1.5$ nats. TME-2 provides better performance, and also lies along a Pareto frontier of compute cost and performance; WSC-KF provides a subsequent set of entries in the Pareto frontier, with lower error than TME-2 and faster computation than any EnKF or PF computations. WSC-KF slightly improves in moving from $m = 3$ to $m = 5$, and significantly improves in increasing $N_\tau$, both at some computational cost. Finally, the CD-EnKF and bootstrap PF each provide gold-standard performance with sufficient particle count, but are significantly more expensive than the (type-II) Gaussian filters.

\section{Conclusion}
We have developed a new nonlinear continuous-discrete filter, which inherits the stability and efficiency benefits of Gaussian approximation filters while enabling high-order path-law approximation in the filtering prediction step. We call this the Wiener-Space Cubature Kalman Filter (WSC-KF), since it relies on the theory of Wiener-space cubature to approximate this infinite-dimensional Wiener path-law using a system of deterministic ordinary differential equations. This unlocks not only higher-order approximation of model transition dynamics, but exploitation of efficient numerical schemes for parallel ODE solving and gradient computation. We have demonstrated the advantages of this approach both theoretically, by comparing its path-law approximation capability directly to that of the continuous-discrete cubature Kalman filter, and numerically against state-of-the-art baselines. 
\section*{AI Use Declaration}

Generative AI was used to assist in code development, to provide feedback on the manuscript, and check the proofs. All code was reviewed and checked for correctness by the authors. The research idea, mathematical formulation and analysis, experimental design, and all writing are solely the work of the authors.

\bibliographystyle{plain}        
\bibliography{Bibliography}           

\begin{thebibliography}{10}

\bibitem{arasaratnam2010cubature}
Ienkaran Arasaratnam, Simon Haykin, and Thomas~R Hurd.
\newblock Cubature {K}alman filtering for continuous-discrete systems: Theory and simulations.
\newblock {\em IEEE Transactions on Signal Processing}, 58(10):4977--4993, 2010.

\bibitem{beard1999nonlinear}
Randal Beard, John Kenney, Jacob Gunther, Jonathan Lawton, and Wynn Stirling.
\newblock Nonlinear projection filter based on {G}alerkin approximation.
\newblock {\em Journal of Guidance, Control, and Dynamics}, 22(2):258--266, 1999.

\bibitem{berzuini1997dynamic}
Carlo Berzuini, Nicola~G Best, Walter~R Gilks, and Cristiana Larizza.
\newblock Dynamic conditional independence models and {M}arkov chain {M}onte {C}arlo methods.
\newblock {\em Journal of the American Statistical Association}, 92(440):1403--1412, 1997.

\bibitem{crisan2013kusuoka}
Dan Crisan and Salvador Ortiz-Latorre.
\newblock A {K}usuoka--{L}yons--{V}ictoir particle filter.
\newblock {\em Proc. R. Soc. Lond. Ser. A Math. Phys. Eng. Sci}, 469(20130076), 2013.

\bibitem{daum2006meshfree}
Fred Daum and Mikhail Krichman.
\newblock Meshfree adjoint methods for nonlinear filtering.
\newblock In {\em 2006 IEEE Aerospace Conference}. IEEE, 2006.

\bibitem{doucet2001introduction}
Arnaud Doucet, Nando De~Freitas, and Neil Gordon.
\newblock An introduction to sequential {M}onte {C}arlo methods.
\newblock In {\em Sequential {M}onte {C}arlo Methods in Practice}, pages 3--14. Springer, 2001.

\bibitem{friz2014course}
Peter~K Friz, Martin Hairer, et~al.
\newblock {\em A course on rough paths: with an introduction to regularity structures}, volume~14.
\newblock Springer, 2014.

\bibitem{fuller1969analysis}
AT~Fuller.
\newblock Analysis of nonlinear stochastic systems by means of the {F}okker--{P}lanck equation.
\newblock {\em International Journal of Control}, 9(6):603--655, 1969.

\bibitem{jazwinski2007stochastic}
Andrew~H Jazwinski.
\newblock {\em Stochastic Processes and Filtering Theory}.
\newblock Courier Corporation, 2007.

\bibitem{jia2013high}
Bin Jia, Ming Xin, and Yang Cheng.
\newblock High-degree cubature kalman filter.
\newblock {\em Automatica}, 49(2):510--518, 2013.

\bibitem{kalman1960new}
R.~E. Kalman.
\newblock A new approach to linear filtering and prediction problems.
\newblock {\em Journal of Basic Engineering}, 82(1):35--45, 03 1960.

\bibitem{karatzas2014brownian}
Ioannis Karatzas and Steven Shreve.
\newblock {\em Brownian motion and stochastic calculus}.
\newblock springer, 2014.

\bibitem{kloeden1992numerical}
Peter~E. Kloeden and Eckhard Platen.
\newblock {\em Numerical Solution of Stochastic Differential Equations}.
\newblock Springer-Verlag, Berlin, 1992.

\bibitem{lee2013adaptive}
Wonjung Lee and Terry Lyons.
\newblock The adaptive patched particle filter and its implementation.
\newblock {\em arXiv preprint arXiv:1311.6755}, 2013.

\bibitem{litterer2012high}
Christian Litterer and Terry Lyons.
\newblock High order recombination and an application to cubature on {W}iener space.
\newblock {\em The Annals of Applied Probability}, pages 1301--1327, 2012.

\bibitem{lyons2014series}
Simon~MJ Lyons, Simo S{\"a}rkk{\"a}, and Amos~J Storkey.
\newblock Series expansion approximations of brownian motion for non-linear kalman filtering of diffusion processes.
\newblock {\em IEEE Transactions on Signal Processing}, 62(6):1514--1524, 2014.

\bibitem{lyons2004cubature}
Terry Lyons and Nicolas Victoir.
\newblock Cubature on {W}iener space.
\newblock {\em Proceedings of the Royal Society of London. Series A: Mathematical, Physical and Engineering Sciences}, 460(2041):169--198, 2004.

\bibitem{masaya2021ensemble}
Masaya Murata, Isao Kawano, and Koichi Inoue.
\newblock Ensemble kalman filter for continuous-discrete state-space models.
\newblock In {\em 2021 60th IEEE Conference on Decision and Control (CDC)}, pages 6608--6613, 2021.

\bibitem{ninomiya2008weak}
Syoiti Ninomiya and Nicolas Victoir.
\newblock Weak approximation of stochastic differential equations and application to derivative pricing.
\newblock {\em Applied Mathematical Finance}, 15(2):107--121, 2008.

\bibitem{raviart2006analysis}
Pierre-Arnaud Raviart.
\newblock An analysis of particle methods.
\newblock In {\em Numerical Methods in Fluid Dynamics: Lectures given at the 3rd 1983 Session of the Centro Internationale Matematico Estivo (CIME) held at Como, Italy, July 7--15, 1983}, pages 243--324. Springer, 2006.

\bibitem{ribeiro2004kalman}
Maria~Isabel Ribeiro.
\newblock {K}alman and extended {K}alman filters: Concept, derivation and properties.
\newblock {\em Institute for Systems and Robotics}, 2004.

\bibitem{litterer2026high}
Emilio Rossi~Ferrucci, Timothy Herschell, Christian Litterer, and Terry Lyons.
\newblock High-degree cubature on {W}iener space through unshuffle expansions.
\newblock {\em Proceedings of the Royal Society of London. Series A, Mathematical and Physical Sciences}, 2026.

\bibitem{sarkka2019applied}
Simo S{\"a}rkk{\"a} and Arno Solin.
\newblock {\em Applied Stochastic Differential Equations}.
\newblock Cambridge University Press, 2019.

\bibitem{snow2025efficient}
Luke Snow and Vikram Krishnamurthy.
\newblock Efficient neural {SDE} training using {W}iener-space cubature.
\newblock {\em arXiv preprint arXiv:2502.12395}, 2025.

\bibitem{stroock2007multidimensional}
Daniel~W Stroock and SR~Srinivasa Varadhan.
\newblock {\em Multidimensional Diffusion Processes}.
\newblock Springer, 2007.

\bibitem{wan2000unscented}
Eric~A Wan and Rudolph Van Der~Merwe.
\newblock The unscented {K}alman filter for nonlinear estimation.
\newblock In {\em Proceedings of the IEEE 2000 Adaptive Systems for Signal Processing, Communications, and Control Symposium}, pages 153--158. IEEE, 2000.

\bibitem{zhao2020taylor}
Zheng Zhao, Toni Karvonen, Roland Hostettler, and Simo S{\"a}rkk{\"a}.
\newblock Taylor moment expansion for continuous-discrete gaussian filtering.
\newblock {\em IEEE Transactions on Automatic Control}, 66(9):4460--4467, 2020.

\end{thebibliography}

\appendix

\section{Supporting Results}
\label{ap:supp}

Here we provide several results which support the proof of Theorem~\ref{thm:WSC-KF-error}. Lemma~\ref{lem:usb} provides a uniform Markov semigroup bound, and follows from standard results in SDE theory. Lemma~\ref{lem:Lyons} extends Proposition 3.2 of \cite{lyons2004cubature} from \emph{bounded} functions to two particular polynomial-growth functions we will need for bounding the Markov semigroup moments. 
\begin{lem}[Uniform Semigroup Boundedness]
\label{lem:usb}
Under Assumption~\ref{as:smooth}, for any finite set of points $\mathcal{S} = \{\zeta_i\}_{i=1}^{|\mathcal{S}|}$ and $\wct_0>0$, there exists $M_{\mathcal{S}}> 0$ such that $\sup_{\zeta\in\mathcal{S},\wct<\wct_0}\|\E[x_{\wct} |x_0=\zeta]\| \leq M_{\mathcal{S}}$.
\end{lem}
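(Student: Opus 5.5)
The plan is to establish a standard second-moment bound for the SDE~\eqref{eq:strat_sde} started at a fixed point, and then take the supremum over the finite set $\mathcal{S}$. By Jensen's inequality, $\|\E[x_\wct\given x_0=\zeta]\|\le \E[\|x_\wct\|\given x_0=\zeta] \le (\E[\|x_\wct\|^2\given x_0=\zeta])^{1/2}$. It therefore suffices to show
\[
\sup_{\wct<\wct_0}\E\big[\|x_\wct\|^2\given x_0=\zeta\big]\le C(\wct_0)\,(1+\|\zeta\|^2)
\]
for each $\zeta$. One can then set $M_{\mathcal S}:=\bigl(C(\wct_0)(1+\max_{\zeta\in\mathcal S}\|\zeta\|^2)\bigr)^{1/2}$, which is finite because $\mathcal S$ is finite.

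First I will convert the Stratonovich form~\eqref{eq:strat_sde} back to It\^o form, $\mathrm{d}x_t=\tilde V_0(x_t)\,\mathrm{d}t+\sum_{i\ge1}V_i(x_t)\,\mathrm{d}W_t^i$, where the drift is $\tilde V_0=V_0+\tfrac12\sum_{i}(DV_i)V_i$. The diffusion fields $V_i$ are globally Lipschitz by Assumption~\ref{as:smooth}, so $\|V_i(x)\|\le C(1+\|x\|)$. For $V_0$ this linear growth is immediate. The correction term $(DV_i)V_i$ is the point needing care: $\|DV_i\|$ is bounded by the Lipschitz constant of $V_i$ (since $V_i\in C^1$), and $\|V_i(x)\|\le C(1+\|x\|)$, so $\|(DV_i)V_i(x)\|\le L\,C(1+\|x\|)$. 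This gives $\tilde V_0$ linear growth. Global Lipschitzness of $\tilde V_0$ is not needed for the moment bound; existence and uniqueness of a strong solution holds for the original equation, since it is equivalent to the It\^o SDE~\eqref{eq:model} with Lipschitz coefficients, or locally Lipschitz ones under $C^{m+1}$ smoothness together with linear growth.

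Next I apply It\^o's formula to $\|x_t\|^2$, stopped at $\sigma_R=\inf\{t:\|x_t\|\ge R\}$ so that the martingale term has zero expectation. Using linear growth, this gives
\[
\E\|x_{t\wedge\sigma_R}\|^2\le\|\zeta\|^2+C\int_0^t\bigl(1+\E\|x_{s\wedge\sigma_R}\|^2\bigr)\,\mathrm{d}s .
\]
Gr\"onwall's lemma then yields $\E\|x_{t\wedge\sigma_R}\|^2\le(\|\zeta\|^2+Ct)e^{Ct}$, and Fatou's lemma as $R\to\infty$ removes the localization. Taking $t\le\wct_0$ gives the uniform constant $C(\wct_0)=(1+C\wct_0)e^{C\wct_0}$. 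This is the classical estimate, e.g.\ \cite{karatzas2014brownian} or \cite[Thm.~4.5.4]{kloeden1992numerical}, and I would simply cite it after verifying linear growth.

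I expect the only genuine obstacle to be verifying linear growth of the It\^o–Stratonovich correction drift. Assumption~\ref{as:smooth} on its own allows polynomial growth of derivatives, which could seemingly make $(DV_i)V_i$ superlinear. This is resolved by noting that global Lipschitzness of $V_i$ forces $DV_i$ to be uniformly bounded. If one instead interprets the assumption as applying to the It\^o coefficients $f,G$ of~\eqref{eq:model}, the bound is immediate without this step.
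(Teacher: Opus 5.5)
Your proposal is correct and follows essentially the same route as the paper, which invokes the classical second-moment estimate \cite[Thm.~2.9, eq.~(2.15)]{karatzas2014brownian} and takes the maximum of its right-hand side over the finite set $\mathcal{S}$; the Jensen step you make explicit is left implicit there. Your extra care is a worthwhile refinement: you check that the It\^o--Stratonovich correction $(DV_i)V_i$ has linear growth, and your localized It\^o--Gr\"onwall--Fatou derivation needs only linear growth. This matters because under Assumption~\ref{as:smooth} that correction need not be globally Lipschitz, so the Lipschitz hypothesis of the cited theorem is not literally guaranteed.
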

\vspace{-0.4cm}
\begin{pf}
This follows directly from Theorem 2.9 of \cite{karatzas2014brownian}, by simply taking the constant $M_{\mathcal{S}}$ to be the maximum r.h.s. of \cite{karatzas2014brownian} equation 2.15 over the finite set $\mathcal{S}$.\qed
\end{pf}

\begin{lem}
\label{lem:Lyons}
Let $\chi(x)=x_p$ or $\chi(x)=x_px_q$.  Under Assumption~\ref{as:smooth}, for every
finite set $S=\{\zeta_i\}$ and every $\tau_0>0$ there is
$C_{S,\tau_0}<\infty$ such that, for all $\zeta\in S$ and
$t\in(0,\tau_0]$,
\[\bigl|\mathbb E[\chi(x_t)\mid x_0=\zeta] -\textstyle\sum_{j=1}^{N_c}\lambda_j\chi(\Phi_{t,\zeta}(w_j))\bigr| \le C_{S,\tau_0}\,t^{(m+1)/2}. \]
where $\Phi_{t,\zeta}(w_j)$ denotes the solution \eqref{eq:Phi_dyn}.
\end{lem}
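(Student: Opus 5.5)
The plan is to rerun the Lyons--Victoir argument behind Proposition~\ref{prop:cub_ode} with $\chi$ in place of a bounded test function. Since $\chi$ is a polynomial of degree at most two, it is in $C^\infty$ but not bounded. So every place where the original proof uses $\|\psi\|_{C^k_b}$ must be replaced by moment bounds along the SDE and along the cubature ODEs.

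First, apply the Stratonovich--Taylor expansion \eqref{eq:sol_exp} to $\chi$ at order $m$ for both objects. For the SDE, iterate It\^o's formula in Stratonovich form:
\[
\chi(x_t)=\sum_{(i_1,\ldots,i_\ell)\in\mathcal A_m}(V_{i_1:i_\ell}\chi)(\zeta)\,\int\circ\mathrm dW^{i_1}\cdots\circ\mathrm dW^{i_\ell}+R_m(t,\zeta,W).
\]
Here $R_m$ is a finite sum of iterated integrals of length $\ell$ with $\ell+\mathrm{card}\{j:i_j=0\}\in\{m+1,m+2\}$, whose innermost integrand is $(V_{i_1:i_\ell}\chi)(x_{s})$. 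The same expansion holds for $\chi(\Phi_{t,\zeta}(w_j))$, with $W$ replaced by $w_j$ and $x_s$ replaced by the ODE solution $\Phi_{s,\zeta}(w_j)$.

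Second, take $\mathbb E$ on one side and $\sum_j\lambda_j$ on the other. The leading terms cancel exactly by Definition~\ref{def:cub}, since the coefficients $(V_{i_1:i_\ell}\chi)(\zeta)$ are deterministic numbers. It then remains to show that both remainders are $O(t^{(m+1)/2})$, uniformly for $\zeta\in S$ and $t\le\tau_0$. By Assumption~\ref{as:smooth}, each $V_{i_1:i_\ell}\chi$ with $\ell\le m+2$ involves derivatives of the $V_i$ of order at most $m+1$ times derivatives of $\chi$. It therefore has polynomial growth: $|V_{i_1:i_\ell}\chi(x)|\le C(1+\|x\|^{\rho})$ for some $\rho$.

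\textbf{Stochastic remainder.} Use the BDG/Cauchy--Schwarz estimates on iterated integrals: an integral with $\ell$ entries and $n_0$ zero-entries has $L^2$ norm at most $C\,t^{(\ell+n_0)/2}\sup_{s\le t}\|g(x_s)\|_{L^4}$. Combine this with the standard moment bound
\[
\mathbb E\sup_{s\le\tau_0}\|x_s\|^p\le C_p(1+\|\zeta\|^p),
\]
which holds under global Lipschitz conditions (the same Karatzas--Shreve estimate used in Lemma~\ref{lem:usb}). Taking the maximum over the finite set $S$ gives $|\mathbb E R_m|\le C t^{(m+1)/2}$.

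\textbf{Cubature remainder.} The paths have bounded variation, and the cubature formula at time $t$ is obtained by Brownian scaling: $w_j^{(t)}(s)=\sqrt{t}\,w_j^{(1)}(s/t)$. Hence an iterated integral along $w_j^{(t)}$ with signature $(\ell,n_0)$ is bounded by $t^{(\ell+n_0)/2}$ times a constant depending only on the lengths $\|w_j^{(1)}\|_{1\text{-var}}$. The integrand must be controlled along the ODE. Since the $V_i$ are globally Lipschitz, they have linear growth, and Gr\"onwall's inequality with the finite total variation $\sqrt{t}\|w_j^{(1)}\|_{1\text{-var}}+t\le C(\tau_0)$ gives $\sup_{s\le t}\|\Phi_{s,\zeta}(w_j)\|\le C(1+\|\zeta\|)$. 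This is uniform in $t\le\tau_0$ and in $j$ (there are only finitely many $j$), and hence the polynomial-growth integrands are uniformly bounded. Summing over $j$ with $\sum_j\lambda_j=1$ gives $O(t^{(m+1)/2})$.

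The main obstacle I expect is the stochastic remainder. There one must make sure that the mixed Stratonovich iterated integrals with a polynomially growing, adapted innermost integrand really obey the $t^{(\ell+n_0)/2}$ scaling in $L^1$. The approach is to convert each Stratonovich integral to It\^o form; the correction terms only shift indices into $\mathcal A$ of the same weighted degree. Then apply BDG inductively from the inside out, using the uniform $L^p$ moment bounds to absorb the growth.
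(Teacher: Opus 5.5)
Your proposal is correct and follows essentially the same route as the paper. Both reopen the Lyons--Victoir remainder argument and observe that it only involves $V_{i_1}\cdots V_{i_\ell}\chi$ for indices in $\mathcal{A}_{m+2}\setminus\mathcal{A}_m$, which have polynomial growth under Assumption~\ref{as:smooth}; the sup-norms are then replaced by uniform moment bounds along the SDE (Karatzas--Shreve) and along the cubature ODEs (Gr\"onwall with finite $1$-variation). Your explicit use of the Brownian scaling $w_j^{(t)}(s)=\sqrt{t}\,w_j^{(1)}(s/t)$ serves two purposes: it makes the Gr\"onwall constant uniform in $t\le\tau_0$, and it yields the $t^{(\ell+n_0)/2}$ size of the ODE remainder terms. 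This spells out a step the paper delegates to Lyons--Victoir's Proposition~3.2.
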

\vspace{-0.5cm}
\begin{pf}
The remainder in \cite[Prop.~3.2]{lyons2004cubature} is controlled by
$V_{i_1}\cdots V_{i_k}\chi$ along the SDE and along each cubature ODE, for $(i_1,\ldots,i_k)\in A_{m+2}\setminus A_m$.  Assumption~\ref{as:smooth} gives $|V_{i_1}\cdots V_{i_k}\chi(x)|\le C(1+\|x\|^r)$.  It is therefore enough that these trajectories have uniformly finite moments of order $r$ on $[0,\tau_0]$.
The fields are globally Lipschitz, hence of linear growth.  For the SDE, \cite[Thm.~2.9]{karatzas2014brownian} yields
$\mathbb E[\sup_{s\le\tau_0}\|x_s\|^r\mid x_0=\zeta]<\infty$ uniformly for $\zeta\in S$. For each cubature ODE~(7), $w_j$ has finite $1$-variation on $[0,\tau_0]$. Gronwall's inequality therefore gives
\begin{align*}
&\sup_{s\le\tau_0}\|\Phi_{s,\zeta}(w_j)\| \\&\,\,\le
\bigl(\|\zeta\|+C(\tau_0+|w_j|_{1\text{-var}})\bigr)
\exp\bigl(C(\tau_0+|w_j|_{1\text{-var}})\bigr),
\end{align*}
uniformly in the finite set of paths $j$ and nodes $\zeta\in S$. The remainder constant is finite, and \cite[Prop.~3.2]{lyons2004cubature} yields the
rate.  \qed
\end{pf}

\vspace{-0.4cm}
\section{Main Result Proof}
\label{ap:main_pf}
\vspace{-0.5cm}
\begin{pf*}{PROOF of Theorem~\ref{thm:WSC-KF-error}}

\textit{Mean Bound}: We denote the set of spatial cubature nodes by $\{\zeta_i\}_{i=1}^{\scn}$, with weights $\{c_i\}_{i=1}^{\scn}$. Also denote $\mu(x) := \E[x_k  |x_{k-1} = x],\,\,\, \Sigma(x) := \text{Cov}(x_k | x_{k-1} = x)$
For each spatial node, let us write
$\hat{\mu}(\zeta_i) := \sum_{j=1}^{\wcn}\lambda_j\Phi_{\Dt,\zeta_i}(w_j)$
as the Wiener cubature forecast, approximating the statistic $\E[x_{t+\Dt} |x_t = \zeta_i]$. Then, $\bar{x}_{k|k-1}$ in \eqref{eq:fore_stat} is simply $\bar{x}_{k|k-1}  = \sum_{i=1}^{\scn}c_i\hat{\mu}(\zeta_i)$
Let us also write 
$\tilde{x}_{k|k-1} := \sum_{i=1}^{\scn}c_i\mu(\zeta_i)$
as the exact (w.r.t. true forecast law) prediction induced by the approximate spatial cubature. Then, we may split the total WSC-KF forecast mean error by triangle inequality:
$\|\bar{x}_{k|k-1} - x_{k|k-1}\| \leq \|\bar{x}_{k|k-1} - \tilde{x}_{k|k-1}\| + \|\tilde{x}_{k|k-1} - x_{k|k-1}\|$
where the first term on the r.h.s. quantifies the error w.r.t. only the Wiener-cubature approximation of the transition law, and the second term quantifies the error introduced by the spatial cubature approximation. Now the first term can be bounded straightforwardly as $\|\bar{x}_{k|k-1} - \tilde{x}_{k|k-1}\| = \left\|\sum_{i=1}^{\scn} c_i (\hat{\mu}(\zeta_i) - \mu(\zeta_i))\right\| \leq \sum_{i=1}^{\scn}c_i\|\hat{\mu}(\zeta_i) - \mu(\zeta_i)\| = O(\Dt^{(m+1)/2})$
where the last equality follows from Lemma~\ref{lem:Lyons}. The second term corresponds exactly to the error incurred by the spatial cubature approximation.

\textit{Covariance Bound}:
We begin by writing the degree-2 $C^{\infty}$-smooth polynomial function $f_{pq}(\xi) = \xi_p\xi_q$. Then, by Lemma~\ref{lem:Lyons} we obtain the entrywise bound, with $x_\wct$ the solution of \eqref{eq:strat_sde} at time $\wct$ and any initial point $\zeta_i \in S$, $\left|\E[(x_\wct)_p(x_\wct)_q - \sum_{j=1}^{\wcn} \lambda_j [\Phi_{\wct,\zeta_i}(w_j)]_p[\Phi_{\wct,\zeta_i}(w_j)]_q\right|\quad = O(\wct^{(m+1)/2}), \quad \forall p,q\in[\xdim], x_0\in\reals^{\xdim}$
and thus $ \|\E[x_\wct x_\wct^{\top}] - \sum_{j=1}^{\wcn} \lambda_j \Phi_{\wct,\zeta_i}(w_j)\Phi_{\wct,\zeta_i}(w_j)^{\top}\|_F = O(\wct^{(m+1)/2})$,
where $\|\cdot\|_F$ denotes Frobenius norm. Denote $P(\zeta_i)$ the analogue of covariance $\Sigma(\zeta_i)$ under the WSC-KF predict step. Now, for each $\zeta_i$, we may decompose $P(\zeta_i) - \Sigma(\zeta_i) = \left[\sum_{j=1}^{\wcn} \lambda_j \Phi_{\wct,\zeta_i}\Phi_{\wct,\zeta_i}^{\top} -\E[x_\wct x_{\wct}^{\top}]\right]  - \left[\hat{\mu}(\zeta_i)\hat{\mu}(\zeta_i)^\top - \mu(\zeta_i) \mu(\zeta_i)^\top \right]$. 
The first term in the r.h.s. of this equality was bounded in Frobenius norm in the previous line. 

Let us for now simplify notation by absorbing $\zeta_i$. The second term can be bounded using the identity $\hat{\mu}\hat{\mu}^\top - \mu\mu^\top = (\hat{\mu} - \mu)\mu^\top + \hat{\mu}(\hat{\mu} - \mu)^\top$, and splitting as $\| (\hat{\mu} - \mu)\mu^\top + \hat{\mu}(\hat{\mu} - \mu)^\top\|_F \leq \|(\hat{\mu} - \mu)\mu^\top\|_F + \|\hat{\mu}(\hat{\mu} - \mu)^\top\|_F\leq \|\hat{\mu} - \mu\|\|\mu\| + \|\hat{\mu}\|\|\hat{\mu} - \mu\|$
where the inequality is simply the triangle inequality, and the equality is a property of the Frobenius norm. 
Then, using Lemma~\ref{lem:Lyons} again the difference terms are controlled by the $O(\wct^{(m+1)/2})$ rate. By Lemma~\ref{lem:usb}, we have $\|\mu\| \leq M_{\mathcal{S}}$, and also $\|\hat{\mu}\| \leq \|\mu\| + \|\hat{\mu} - \mu\| \leq M_{\mathcal{S},\wct} + O(\wct^{(m+1)/2})$. Thus, combining gives $\|\hat{\mu}(\zeta_i)\hat{\mu}(\zeta_i)^\top - \mu(\zeta_i)\mu(\zeta_i)^\top\|_F = O(\wct^{(m+1)/2})$, and then the triangle inequality produces
\begin{align}
\label{eq:PSig_diff}
    \|P(\zeta_i) - \Sigma(\zeta_i)\|_F = O(\wct^{(m+1)/2})
\end{align}
Now, average over the $\scn$ nodes with weights $c_i$. By the law of total variance, $P_{k|k-1} = \sum_{i=1}^{\scn}  c_i P(\zeta_i) + \sum_{i=1}^{\scn}c_i\big(\hat\mu(\zeta_i)-\bar x_{k|k-1}\big)^{\otimes 2},
    \tilde\Sigma_{k|k-1} := \sum_{i=1}^{\scn} c_i\Sigma(\zeta_i) + \sum_{i=1}^{\scn}c_i\big(\mu(\zeta_i)-\tilde x_{k|k-1}\big)^{\otimes 2}$,
where $v^{\otimes 2}:=vv^\top$, $\tilde x_{k|k-1}=\sum_{i=1}^{\scn} c_i\mu(\zeta_i)$ as above, and $\tilde\Sigma_{k|k-1}$ is the spatial-cubature-only target underlying $\varepsilon_{\scn,\Sigma}$; that is, $\|P_{k|k-1} - \Sigma_{k|k-1}\|_F \leq \|P_{k|k-1} - \tilde{\Sigma}_{k|k-1}\|_F + \|\tilde{\Sigma}_{k|k-1} - \Sigma_{k|k-1}\|_F$, where $\|\tilde{\Sigma}_{k|k-1} - \Sigma_{k|k-1}\|_F = \varepsilon_{\scn,\Sigma}$. To bound $\|P_{k|k-1} - \tilde{\Sigma}_{k|k-1}\|_F$, we  subtract and apply $a^{\otimes2}-b^{\otimes2}=(a-b)a^\top+b(a-b)^\top$ termwise, with $a_i:=\hat\mu(\zeta_i)-\bar x_{k|k-1}$ and $b_i:=\mu(\zeta_i)-\tilde x_{k|k-1}$: the first term is $O(\wct^{(m+1)/2})$ by \eqref{eq:PSig_diff}; the second (between-node) term is likewise $O(\wct^{(m+1)/2})$, since
$a_i-b_i = \big(\hat\mu(\zeta_i)-\mu(\zeta_i)\big) - \big(\bar x_{k|k-1}-\tilde x_{k|k-1}\big) = O(\wct^{(m+1)/2})$
by Lemma~\ref{lem:Lyons} and the mean bound above, while $\|a_i\|,\|b_i\|$ are $\wct$-uniformly bounded by Lemma~\ref{lem:usb}. Hence
\[
\|P_{k|k-1}-\tilde\Sigma_{k|k-1}\|_F \leq C_\Sigma\,\wct^{(m+1)/2}.
\]
\qed
\end{pf*}

\end{document}